\documentclass{ifacconf}

\newtheorem{definition}[thm]{Definition}
\newtheorem{lemma}[thm]{Lemma}
\newtheorem{assumption}[thm]{Assumption}
\usepackage{graphicx}      
\usepackage{natbib}        

\usepackage{graphicx}
\usepackage{epstopdf}
\usepackage{xcolor}
\usepackage{algorithm}
\usepackage{algpseudocode}

\newtheorem{remark}[thm]{Remark}
\usepackage{amsmath,amssymb,amsfonts}  

\usepackage{anyfontsize}
\usepackage{booktabs}
\usepackage{graphicx}
\begin{document}
\begin{frontmatter}

\title{Topology Estimation for Open Multi-Agent Systems\thanksref{footnoteinfo}} 

\thanks[footnoteinfo]{This work is supported by the Swedish Research Council (VR) and the	Knut and Alice Wallenberg Foundation.}

\author[First]{Nana Wang},
\author[First]{Pelin Sekercioglu}, 
\author[First]{Dimos V. Dimarogonas}

\address[First]{KTH Royal Institute of Technology, SE-100 44 Stockholm, Sweden (e-mail: \{nanaw, pelinse, dimos\}@kth.se).}

\begin{abstract}                
We address the problem of interaction topology identification in open multi-agent systems (OMAS) with dynamic node sets and fast switching interactions. In such systems, new agents join and interactions change rapidly, resulting in intervals with short dwell time and rendering conventional segment-wise estimation and clustering methods unreliable. To overcome this, we propose a projection-based dissimilarity measure derived from a consistency property of local least-squares operators, enabling robust mode clustering. Aggregating intervals within each cluster yields accurate topology estimates. The proposed framework offers a systematic solution for reconstructing the interaction topology of OMAS subject to fast switching. Finally, we illustrate our theoretical results via numerical
simulations.

\end{abstract}

\begin{keyword}
Topology estimation, Open multi-agent systems, Time segment clustering
\end{keyword}

\end{frontmatter}

\section{Introduction}

Open multi-agent systems (OMAS) are networks, where agents can join or leave and modify their interaction over time. They naturally encompass a broad family of physical and biological systems, including social \citep{hendrickx2016open,franceschelli2020stability}, biological \citep{perelson1997immunology, dimitrov2024liana+}, and engineered systems \citep{restrepo2022consensus}. In several works on OMAS (see \textit{e.g.,} \cite{xue2022stability,restrepo2022consensus,CDC2025OMAS}), openness is captured using the switched system representation, where the system switches modes whenever at least one node or edge is added or removed. The underlying topological structure plays a central role in shaping system-level properties and its collective behaviors. However, in all of these works, the network is assumed to be \textit{known}. This motivates the need for topology estimation methods capable of handling uncertain, time-varying, and structurally evolving networks. 

Relevant insights stem from the literature on identification of switched linear systems, which provides a conceptual framework for analyzing systems characterized by multiple (interaction) modes. In this context, a mode represents a communication graph, and the switch occurs among different modes.
The works in \citep{massucci2021regularized} and \citep{massucci2022statistical} investigated switched discrete-time linear systems, leveraging statistical learning theory to derive identification guarantees. In their approach, the system matrix is first estimated for each time segment (when one mode dwells consecutively until the next mode is executed) and the resulting matrices are clustered based on their pairwise distances. Then, the system matrix for each mode is re-estimated using all time segments belonging to the same mode, and this procedure is repeated until convergence.  However, since short time segments contain limited information, estimating the system matrix directly from each time segment is inaccurate. Hence, if two time segments share the same system matrix, their limited data can produce substantially different least-squares estimates, leading to incorrect estimation and mode assignments, and causing poor performance.  Several works \citep{Wang2025auto, sun2023identifiability} have further examined the identifiability and recovery of switching network topologies. These works established conditions under which the topology in each mode can be uniquely identified. \citep{rey2025online} studied the online estimation problem for expanding graphs via regularized least-squares methods promoting temporal smoothness. Although these contributions significantly advance the understanding of time-varying and switching networks, they generally assume fixed or monotonically expanding node sets, or rely on slowly varying topology changes. Such assumptions limit their applicability in OMAS characterized by rapid agent arrivals, departures, and abrupt interaction changes. 

Motivated by the need to develop new topology estimation methods for OMAS, we simultaneously consider dynamic node-set variations and fast switching interactions.   A key challenge in this setting is that each short time segment  contains limited information, rendering direct segment-wise topology estimation unreliable. Thus, our contributions are threefold.  First, we propose algorithms that estimate each mode by aggregating information across all time segments in which the same mode occurs, in contrast to \citep{sun2023identifiability}, which  relies solely on one time segment estimates obtained. Second, we introduce a new distance criterion, called projection-based dissimilarity measure, that compares time segments without requiring accurate topology estimates for each time segment. This criterion enables robust clustering of time segments belonging to the same mode, even under sparse observations and rapidly changing agent sets. Third, building on these ideas, we develop a two-stage topology reconstruction framework that first clusters time segments according to their latent modes and then aggregates information within each cluster to estimate the topology associated with each mode. This framework provides a resilient solution for reconstructing the interaction topologies in OMAS with dynamic and evolving population.

\begin{figure*}[!]  
    \centering
    \includegraphics[width=\textwidth]{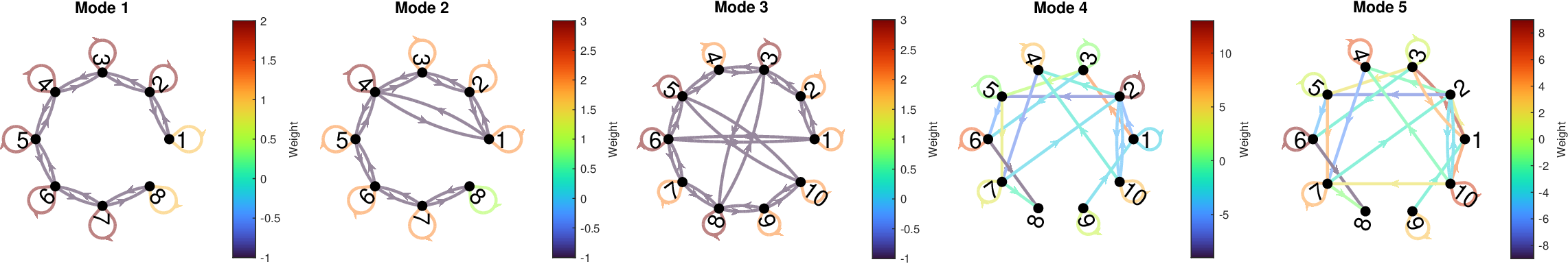}
    \caption{Example of directed weighted graphs representing each $L^{\mathcal{T}(k)}$. The color of the edges denotes the corresponding weight, as indicated by the color bar for each graph. Modes $1$ and $2$ correspond to the modes for an $8$-agent network, while Modes $3-5$ correspond to the  modes for an $10$-agent network.}
    \label{fig:directedweight}
\end{figure*}

The structure of this paper is as follows: the problem formulation is presented in
Section \ref{Sec:problemf}.   In Section \ref{sec:frame}, we introduce two algorithms to estimate the switching topology for each interval and combine the information collected from multiple intervals to estimate its connectivity matrices.  Section \ref{sec:clustering} presents an algorithm to cluster time segments with the same mode and a framework to combine mode clustering and topology estimation.  
Section \ref{Sec:simu} verifies the proposed algorithms' effectiveness with a numerical example and Section \ref{Sec:con} concludes the paper.

\textit{Notations.} $\mathbb{R}^{n\times m}$ and $\mathbb{R}_{\ge 0}$ denote the set of the real matrices with approximate dimensions,  and non-negative real numbers.  $\mathbb{N}^+$ denotes the set of positive nature numbers. 
$\lfloor x \rfloor$
denotes the floor of $x\in \mathbb{R}$, i.e., the largest integer less than or equal to $x$. $X\succ 0$ denotes that  $X$ is a positive definite matrix. For matrices $X$ and $Y$, $X \succ Y$ means that $X-Y \succ 0$. For a matrix $A \in \mathbb{R}^{n\times m}$, we denote its null space and range by $\ker (A)$ and $\text{ran}(A)$ respectively, \textit{i.e.,} $\ker (A)=\{x \in \mathbb{R}^m | Ax=0\}$ and $\text{ran}(A)=\{y \in \mathbb{R}^n | y=Ax,  x \in \mathbb{R}^n\}$. $A^\top$ denotes the transpose of $A$.  $\|\cdot\|_F$ denotes the Frobenius norm. For a set $S$, $|S|$ denotes its cardinality. For a matrix $A$, $A^T $ denotes its transpose and vec(A) is the vectorization operator that stacks the columns of the matrix $A$ into a single column vector.

\section{Problem formulation}
\label{Sec:problemf}

Let $\mathcal{G}=\big(\mathcal{V},\mathcal{E},W\big)$ be a directed and weighted graph representing a mode, where $\mathcal{V}=\{1,2,\dots,N\}$ denotes the set of nodes corresponding to the agents, $\mathcal{E}\subseteq \mathcal{V}\times\mathcal{V}$ is the set of edges interconnecting the nodes, and $W$ contains the associated edge weights. We consider a multi-agent system, interconnected over a directed and weighted graph, with dynamic interactions, where agents can join and interactions between agents can change over time.
 The time-varying interactions are represented by a switched system representation, where each mode corresponds to a fixed topology and a switch occurs whenever the interaction changes.
The switching instants 
\[
    0=t^0 < t^1 < \cdots < t^\kappa < t^{\kappa+1}=+\infty
\] correspond to events where (i) a new agent joins, (ii) an existing agent leaves, or (iii) the interaction among agents changes (\textit{i.e.,} a new edge is created between agents or the edge weight changes). The dwell time of the $k$-th interval is defined as
$\tau^k = t^{k}-t^{k-1},$ where $k = 1,\dots,\kappa+1$ and $\kappa\in \mathbb{N}^+$.

We denote the mode set by $\mathcal{P} = \{1, \dots, M \}$, where each mode corresponds to a distinct interaction topology of the network. The value $M$ represents the total number of modes that appear in the OMAS (See Remark \ref{re:infinity}, which discusses the potential infiniteness of $M$ and $\kappa$). We define the mapping 
\[
    \mathcal{T}: \{0,1,\dots,\kappa\} \rightarrow \mathcal{P}.
\]

Now consider $k =  0,1,\cdots, \kappa$. During the interval $[t^{k},t^{k+1})$,  the system operates in mode $\mathcal{T}(k)$; we say that this interval belongs to mode $\mathcal{T}(k)$. For $t\in [ t^k, t^{k+1} )$, the system evolves according to
\begin{equation}\label{sys}
    \dot{x}(t)
    = f(x(t)) + L^{\mathcal{T}(k)} G(x(t)) + u(t),
\end{equation}
where $x(t)\in\mathbb{R}^{N_{\mathcal{T}(k)}}$ denotes the collective state of the agents in the active node set $\mathcal{V}_{\mathcal{T}(k)}$, and $N_{\mathcal{T}(k)}=|\mathcal{V}_{\mathcal{T}(k)}|$. The function $f(x(t))$ represents the known internal dynamics, with $f(x(t))= [f_1(x_1(t)), \cdots, f_{N_{\mathcal{T}(k)}}(x_{N_{\mathcal{T}(k)}}(t))]^T$. The matrix $L^{\mathcal{T}(k)}\in\mathbb{R}^{N_{\mathcal{T}(k)}\times N_{\mathcal{T}(k)}}$ is the connectivity matrix representing the unknown weighted interaction topology associated with mode ${\mathcal{T}(k)}$, $G(x(t))=x(t)$  specifies the interaction protocol (e.g., consensus or formation control \citep{mesbahi}), and $u(t)\in\mathbb{R}^{N_{\mathcal{T}(k)}}$ denotes the externally applied control input. We emphasize that both the dimension $N_{\mathcal{T}(k)}$ and the connectivity matrix $L^{\mathcal{T}(k)}$ depend on the mode ${\mathcal{T}(k)}$ ---see Figure  \ref{fig:directedweight} for an example of different modes with different networks sizes. If $L^{\mathcal{T}(k)}_{il} \ne 0$, then there is a directed edge $e_k \in \mathcal{E}_{\mathcal{T}(k)}$ from node $l$ to node $i$ with weight $L^{\mathcal{T}(k)}_{il}$ for all $i,l \in \{1,\dots, N_{\mathcal{T}(k)}\}$ with $i\ne l$.  If $L^{\mathcal{T}(k)}_{ii} \ne 0$, then node $i$ has a self-loop, which belongs to $\mathcal{E}_{\mathcal{T}(k)}$. Thus,  $\mathcal{V}_{\mathcal{T}(k)}$ consists of all active nodes of the system \eqref{sys}, while the edge set $\mathcal{E}_{\mathcal{T}(k)}$ and the edge weight set $W_{\mathcal{T}(k)}$ are derived from $L^{\mathcal{T}(k)}$. We denote the communication graph of \eqref{sys} as a directed weighted graph $\mathcal{G}_{\mathcal{T}(k)}=(\mathcal{V}_{\mathcal{T}(k)},\mathcal{E}_{\mathcal{T}(k)},W_{\mathcal{T}(k)})$, where $\mathcal{V}_{\mathcal{T}(k)}$ is the node set, $\mathcal{E}_{\mathcal{T}(k)}\subseteq \mathcal{V}_{\mathcal{T}(k)}\times\mathcal{V}_{\mathcal{T}(k)}$ is the edge set, and $\mathcal{W}_{\mathcal{T}(k)}$ denotes the associated edge weights.
We are now ready to present our objective.

\noindent\textbf{Objective.}
Given the observed output $y(t)$, the applied input $u(t)$, and the switching sequence $\{t^k\}_{k=0}^{\kappa}$, the goal is to reconstruct the sequence of underlying topologies:
\[
    \left\{ L^{\mathcal{T}(k)} \right\}_{k=0}^{\kappa}.
\]

\section{Estimation over Single and Multiple Intervals}
\label{sec:frame}
In this section, we present a method to excite the network through the external control input $u$ and then estimate the interaction topology. We first propose an algorithm (see Algorithm \ref{alg:algorithm1}) that estimates the interaction topology by comparing an auxiliary system $\hat x$ with the original network $x$, and by estimating the connectivity matrix over each interval. Then, we propose a more flexible method (see Algorithm \ref{alg:algorithm2}) that aggregates information across multiple intervals to estimate the connectivity matrix. Throughout this section, we assume that the mode labels are known, an assumption that will be relaxed in Section \ref{sec:clustering}.

 \subsection{Network Control Design and Topology Estimation for Every Interval}
\label{sub:one mode}
 
To estimate the topology of a mode that may appear across multiple   intervals with short dwell time, we begin by recalling classical excitation definitions for system identification. We adopt the notions of persistence of excitation and sufficient richness, both standard in adaptive control \citep{ioannou1996robust}.
\begin{definition}\label{PE}(PE condition  \citep{ioannou1996robust})
A signal $\phi:\mathbb{R}_{\geq 0}\to \mathbb{R}^{n\times m}$ is said to be persistently exciting (PE)  if, for any $t\ge 0$, there exists  $\gamma>0$ and $T>t$ such that
	\begin{equation}\label{eq:E}
\int_{t}^{t+T} \phi(\tau) \phi(\tau)^\top d \tau \succ \gamma I.
	\end{equation}
  \end{definition}

 \begin{definition}\label{d1e}(Sufficiently rich of order $n$ \citep{ioannou1996robust})
A function $\phi:\mathbb{R}_{\geq 0}\to \mathbb{R}^{n\times m}$ is said to be sufficiently rich of order $n$ if it consists of at least of $n/2$ distinct frequencies.
  \end{definition}
  
To actively inject sufficient excitation into the network \eqref{sys}, we design the control input $u=[u_1, \cdots, u_{N_{\mathcal{T}(k)}}]^T$ as
\begin{equation}
\label{control}
    u_i(t)=\hat u_i(t) -f_i(x_i(t)),
\end{equation}
where $\hat u(t)=[\hat u_1, \cdots, \hat u_{N_{\mathcal{T}(k)}}]^T$ is designed to be sufficiently rich of order $N_{\mathcal{T}(k)}$, satisfying Definition \ref{d1e}. To this end, we propose Algorithm~\ref{alg:algorithm1}, where we reconstruct the connectivity matrix $L^{\mathcal{T}(k)}$ on each interval $[t^{k},t^{k+1})$ by implementing an auxiliary system $\hat x(t)$, described below in \eqref{eq:RK_systema}, provided that the output $w(t)$ of the filter in \eqref{eq:RK_systemb} satisfies an excitation condition. To write the overall dynamics compactly, we define the augmented state $s(t) = \big(\hat x(t),\, w(t),\, \mathrm{vec}(Y(t)),\, \mathrm{vec}(Z(t))\big)$, which collects the auxiliary state, the filter state, and the integral quantities involved in the estimation. Its dynamics is given by
\begin{subequations}\label{eq:RK_system}
\begin{align}
    \dot{\hat x}(t) &= f(x(t)) + L_m x(t) + u(t) + \tau(x(t) - \hat x(t)), \label{eq:RK_systema}\\
    \dot w(t) &= x(t) - \tau w(t),\label{eq:RK_systemb}\\
    \dot Y(t) &= w(t) w(t)^\top,\label{eq:RK_systemc}\\
    \dot Z(t) &= (L_m w(t) + x(t) - \hat x(t) - \zeta(t)) w(t)^\top,\label{eq:RK_systemd}
\end{align}
\end{subequations}
where $\tau>0 $ and $L_m \in \mathbb{R}^{N_{{\mathcal{T}(k)}}\times N_{{\mathcal{T}(k)}}}$  represents the connectivity matrix of the auxiliary system $\hat x(t)$ on each interval and can be initialized, \textit{e.g.,} as the zero or the identity matrix.  The step size $h$ in Algorithm~\ref{alg:algorithm1} is chosen to be small to ensure higher accuracy of system \eqref{eq:RK_system}. 

\begin{algorithm}[t]
\caption{Topology Identification Algorithm over Single Interval}
\label{alg:algorithm1}
\begin{algorithmic}[1] 
\State Initialize the parameters $\tau>0$ and $\gamma>0$; the switching time sequence $\{t^k\}_{k=0}^{\kappa}$ with initial and final times $t_0$ and $t_{\kappa}$; and the step size $h>0$.
\For{$k = 0,1,\dots, \kappa$}
\State Initialize the matrix $L_m \in \mathbb{R}^{N_{{\mathcal{T}(k)}}\times N_{{\mathcal{T}(k)}}}$; $\hat x(t^k) = w(t^k) = 0; \
        Y(t^k) = Z(t^k) = 0;\
        \tilde x(t^k) = x(t^k).$
    
    \For{$\ell = 0,1,2,\dots$ such that $t_\ell = t^k + \ell h < t^{k+1}$}
        \State Design $u(t_\ell) \in \mathbb{R}^{N_{{\mathcal{T}(k)}}}$ using \eqref{control} and update the state $x(t_\ell)\in \mathbb{R}^{N_{{\mathcal{T}(k)}}}$ according to the dynamics \eqref{sys}.
        \State Compute $\zeta(t_\ell) = e^{-\tau (t_\ell - t^k)} \tilde x(t^k)$, where $\tilde x(t^k):= x(t^k) - \hat x(t^k)$.
        \State Update the augmented state $s(t_\ell)$ according to the dynamics \eqref{eq:RK_system}.
\State Perform one Runge--Kutta\footnotemark\ step with step size $h$ on \eqref{eq:RK_system}:
\[
    s(t_{\ell+1}) = \mathrm{RK}(s(t_\ell), t_\ell, h).
\]
       
        \State Extract $(\hat x(t_{\ell+1}), w(t_{\ell+1}), Y(t_{\ell+1}), Z(t_{\ell+1})$ from $s(t_{\ell+1})$.
        \State Update time: $t_{\ell+1} = t_\ell + h$.
        \If{$Y(t_{\ell+1}) \succ \gamma I$}
            \State Calculate the estimated topology matrix 
\begin{equation}\label{estimation_A}
                \hat L(t_{\ell+1}) = Z(t_{\ell+1})\, Y(t_{\ell+1})^{-1}.
            \end{equation}
            \State \textbf{Go to Line 2}.
        \EndIf
\EndFor

  \EndFor
\end{algorithmic}
\end{algorithm}
\footnotetext{A standard explicit numerical integration method for solving ordinary differential equations (See \citep{hairer1993solving}).}

\begin{remark}
\label{re:infinity}
For Algorithms~\ref{alg:algorithm1} and \ref{alg:algorithm2}, the switching times $\kappa$ and the number of modes $M$ can be infinite since the model label for each interval is assumed to be known. In contrast, Algorithm~\ref{alg:mode_clustering}, which determines the mode labels (see Section \ref{sec:clustering}), requires $\kappa$ and $M$ to be finite in order to guarantee that mode matching across time segments is trackable. 
\end{remark}
In the following, we provide theoretical guarantees for Algorithm~\ref{alg:algorithm1}, ensuring that the OMAS maintains certain excitation and allows accurate topology estimation under a dwell time assumption (Assumption \ref{assump:2}). First, we recall the following result from \citep[Lemma 4.8.3]{ioannou1996robust}, which states that persistence of excitation is preserved through a stable, minimum-phase filter, which will be used in our results.
\begin{lemma}
    \label{lemma1}
   If $x(t)$ is bounded and PE, then $w(t)$, with the update law \eqref{eq:RK_system}, is also PE. 
\end{lemma}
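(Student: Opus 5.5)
The filter in \eqref{eq:RK_systemb} is the linear time-invariant system $\dot w = -\tau w + x$. Its transfer function is $H(s) = \frac{1}{s+\tau}I$, applied componentwise to $x$. The plan is to show that $H$ is stable and minimum-phase, and then invoke \citep[Lemma 4.8.3]{ioannou1996robust} directly. I would also sketch the underlying argument, so that it is clear the PE constants survive within one interval.

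First, for $\tau>0$ the single pole $-\tau$ lies in the open left half-plane, so $H$ is exponentially stable. It has no finite zeros, so it is trivially minimum-phase, and it has relative degree one. By variation of constants,
\begin{equation*}
w(t) = e^{-\tau(t-t^k)}w(t^k) + \int_{t^k}^{t} e^{-\tau(t-s)}x(s)\,ds .
\end{equation*}
Since $x$ is bounded, say $\|x\|\le \bar x$, this gives $\|w(t)\|\le \|w(t^k)\| + \bar x/\tau$, so $w$ is bounded. Moreover $\dot w = x - \tau w$ is bounded too, so $w$ is uniformly continuous.

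The core step is the lower bound on $\int_t^{t+T} w w^\top$. I would argue by contradiction, as in Ioannou--Sun. Suppose that for every $\gamma>0$ there is a window and a unit vector $v$ with $\int_t^{t+T}(v^\top w)^2 < \gamma$. The scalar signal $v^\top w$ and its derivative $v^\top x - \tau v^\top w$ are bounded. A small $L_2$ norm of $v^\top w$ on the window, combined with this uniform continuity, forces $v^\top w$ to be uniformly small on the window. Integrating $\dot w$ over the window then gives
\begin{equation*}
\int_{t_1}^{t_2} v^\top x(s)\,ds = v^\top w(t_2)-v^\top w(t_1) + \tau\int_{t_1}^{t_2} v^\top w(s)\,ds ,
\end{equation*}
so all sub-window averages of $v^\top x$ are small. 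A bounded signal whose averages over all sub-windows are small must have small $\int (v^\top x)^2$. This is where boundedness of $x$, and hence its Lipschitz-type control through \eqref{sys}, is used. The conclusion contradicts PE of $x$. Alternatively, one can use the inverse-filter form $x = \dot w + \tau w$, which is legitimate because $H$ has no zeros, and bound $\int (v^\top x)^2$ by a constant times $\int (v^\top w)^2$ plus boundary terms. Either route yields constants $\gamma_w>0$ and $T_w$ for which $w$ satisfies Definition~\ref{PE}.

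The main obstacle is the step where small $v^\top w$ must force small $v^\top x$. Differentiation can amplify the signal, so this requires $x$ to have bounded derivative, or at least to be uniformly continuous, on each interval. On a fixed interval this holds because \eqref{sys} with the input \eqref{control} gives $\dot x = L^{\mathcal{T}(k)}x + \hat u$, which is bounded. The transient term $e^{-\tau(t-t^k)}w(t^k)$ is harmless since $w(t^k)=0$ in Algorithm~\ref{alg:algorithm1}. With both points in place, the cited lemma applies verbatim.
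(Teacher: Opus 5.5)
Your proposal is correct and follows the same route as the paper: you identify $H(s)=\frac{1}{s+\tau}$ as stable and minimum-phase, check that $x$ and $\dot x$ are bounded (you justify the latter through $\dot x = L^{\mathcal{T}(k)}x+\hat u$, which the paper simply asserts), and invoke \citep[Lemma 4.8.3]{ioannou1996robust}. One caveat concerns only your optional inverse-filter alternative: $\int (v^\top x)^2$ cannot be bounded by a constant times $\int (v^\top w)^2$ plus boundary terms, because $\int (v^\top \dot w)^2$ admits no such linear bound, and integrating by parts with the bound on $\dot x$ gives only a sublinear (square-root type) estimate, which still suffices for the contradiction.
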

\begin{pf}
From Algorithm \ref{alg:algorithm1}, the filter $\dot w=x-\tau w$ implies that its transfer function
$H(s)=\frac{1}{s+\tau}$
 is stable and minimum phase. Since $x$ and $\dot x$ are bounded, $H(s)$ satisfies the conditions  in \citep[Lemma 4.8.3]{ioannou1996robust}, and thus PE is preserved through the filter. 
\end{pf}
 We now pose the following assumptions for system \eqref{sys}.
\begin{assumption} \label{assump:1}
Let $L^{\mathcal{T}(n)}$ be Hurwitz on a single interval $[t^n, t^{n+1})$, where $n \in \{0,1,\cdots, \kappa \}.$ 
\end{assumption}
\begin{assumption} \label{assump:2}
The dwell time of the interval $[t^n, t^{n+1})$ for mode $\mathcal{T}(n)$, where $n \in \{0,1,\cdots, \kappa \}$, is sufficiently long to ensure that the filtered signal $w$ is PE. 
\end{assumption}
\begin{lemma}
\label{lemma2}
 Assume that the network states $x(t)$ in \eqref{sys} are bounded. Under  Assumptions \ref{assump:1} and \ref{assump:2}, for the network modeled by the dynamics \eqref{sys}, the control input \eqref{control} ensures that $w$ is PE for mode $\mathcal{T}(n)$ over the interval $[t^n, t^{n+1})$. 
\end{lemma}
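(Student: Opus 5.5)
The plan is to reduce the statement to Lemma~\ref{lemma1}. It then suffices to show that, on $[t^n,t^{n+1})$, the network state $x$ is bounded and PE under the input \eqref{control}.

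\emph{Step 1: closed-loop dynamics.} Substituting \eqref{control} into \eqref{sys} cancels the known internal dynamics $f$. With $G(x)=x$ this gives the linear time-invariant system
\begin{equation*}
\dot x(t)=L^{\mathcal{T}(n)}x(t)+\hat u(t),\qquad t\in[t^n,t^{n+1}).
\end{equation*}
By Assumption~\ref{assump:1}, $L^{\mathcal{T}(n)}$ is Hurwitz. Since $\hat u$ is a finite sum of sinusoids, it is bounded. Hence $x$ is bounded; this is also given as a hypothesis. Moreover $\dot x=L^{\mathcal{T}(n)}x+\hat u$ is bounded, which is the regularity condition used in the proof of Lemma~\ref{lemma1}.

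\emph{Step 2: PE of $x$.} Write $x(t)=e^{L^{\mathcal{T}(n)}(t-t^n)}x(t^n)+x_{ss}(t)$. The first term decays exponentially because $L^{\mathcal{T}(n)}$ is Hurwitz. The second is the steady-state response $x_{ss}(t)=H(s)[\hat u](t)$ with $H(s)=(sI-L^{\mathcal{T}(n)})^{-1}$.

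Since $\hat u$ is sufficiently rich of order $N_{\mathcal{T}(n)}$ (Definition~\ref{d1e}), it has at least $N_{\mathcal{T}(n)}/2$ distinct frequencies $\omega_j$. The standard argument of \citep{ioannou1996robust} on sufficient richness implying PE then applies. The averaged Gram matrix of $x_{ss}$ equals $\sum_j \mathrm{Re}\bigl(H(i\omega_j)\hat U_j\hat U_j^{*}H(i\omega_j)^{*}\bigr)$, where $\hat U_j$ is the amplitude vector of $\hat u$ at frequency $\omega_j$. This matrix is positive definite provided:
\begin{itemize}
\item the vectors $H(i\omega_j)\hat U_j$ and their conjugates span $\mathbb{C}^{N_{\mathcal{T}(n)}}$;
\item $H(i\omega_j)$ is invertible, which holds because $i\omega_j$ is not an eigenvalue of the Hurwitz matrix $L^{\mathcal{T}(n)}$.
\end{itemize}
So spanning reduces to choosing amplitude directions for $\hat u$ that excite every coordinate. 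I will treat this as part of the design of $\hat u$ and state it explicitly, for example by giving each agent its own set of frequencies. The exponentially decaying transient does not destroy PE over sufficiently long windows, since its contribution to the integral in \eqref{eq:E} is bounded and can be absorbed by shrinking $\gamma$.

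\emph{Step 3: conclude.} Apply Lemma~\ref{lemma1} to obtain that $w$ is PE. Assumption~\ref{assump:2} supplies a dwell time long enough for the window length $T$ in Definition~\ref{PE} to fit inside $[t^n,t^{n+1})$. Therefore the excitation inequality is attained within the interval, and $w$ is PE for mode $\mathcal{T}(n)$.

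The main obstacle is Step~2. Sufficient richness of $\hat u$ alone does not automatically give PE of the vector signal $x$ without a spanning condition on the frequency directions. The PE notion is also only meaningful on a finite interval, which is exactly why Assumption~\ref{assump:2} is needed. I would handle the first point with the frequency-domain Gram computation above, and the second by reading Definition~\ref{PE} over a single window contained in the interval.
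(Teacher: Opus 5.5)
Your proposal is correct and follows essentially the paper's route: cancel $f$ with \eqref{control} to obtain the Hurwitz, fully actuated system $\dot x=L^{\mathcal{T}(n)}x+\hat u$, conclude that $x$ is PE from the sufficiently rich input, and pass PE to $w$ through the stable minimum-phase filter (Lemma~\ref{lemma1}), with Assumption~\ref{assump:2} supplying the dwell time. The difference is in the middle step: the paper only invokes ``fully controlled'' and the proof of Theorem~5.2.3 in \citep{ioannou1996robust}, whereas you carry out the frequency-domain Gram computation yourself and state explicitly the spanning condition on the amplitude directions of $\hat u$ that the paper leaves implicit (your per-agent example works if each agent's frequency set contains at least $N_{\mathcal{T}(n)}/2$ distinct nonzero frequencies and the sets are disjoint across agents).
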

\begin{pf} 
    With $u(t)=\hat u(t) -f(x(t))$, the dynamics of \eqref{sys} can be written as $\dot x(t)= L^{\mathcal{T}(n)} x(t) + \hat u(t)$.  $\hat u$ is chosen to be sufficiently rich of order $N_{\mathcal{T}(k)}$ and bounded. The network is fully controlled from the form of \eqref{sys} and under  Assumption \ref{assump:1}, $L^{\mathcal{T}(n)}$ is Hurwitz. Moreover, Assumption \ref{assump:1} ensures that the dwell time is long enough to render $w$ PE.  Then, from the proof of \cite[Theorem 5.2.3]{ioannou1996robust}, $x$ will be sufficiently rich of order $N_{\mathcal{T}(k)}$, thereby $w(t)$ also satisfies PE condition. 
 \end{pf}
The following theorem provides an original contribution of this paper regarding the identification of $L^{\mathcal{T}(n)}$. 
\begin{thm}
\label{theo:1} Consider the network dynamics
\eqref{sys} evolving under a constant interaction topology defined by $L^{\mathcal{T}(k)}$ on the interval $[t^k,t^{k+1})$.
Moreover, the auxiliary system \eqref{eq:RK_system} is simulated by implementing Algorithm \ref{alg:algorithm1}.
Then, for $t\in[t^k,t^{k+1})$, the matrices $Y(t)$ and $Z(t)$, defined in \eqref{eq:RK_system}, satisfy
\begin{equation}
    Z(t) = L^{\mathcal{T}(k)} \, Y(t),\quad k=1,\dots, \kappa.
    \label{eq:ZY_relation_aux}
\end{equation}
Furthermore, under Assumptions \ref{assump:1} and \ref{assump:2}, for $[t^n, t^{n+1})$, $Y(t)$ is invertible for a certain time $t$ satisfying $t\ge t_e$, where $t_e =\{t_{n+1}>t>t^n| Y(t)> \gamma I\}$, thereby the estimator \eqref{estimation_A}
recovers the true topology of mode $n$, i.e.\ $\hat L^{\mathcal{T}(n)}(t) = L^{\mathcal{T}(n)}$, where $\hat L^{\mathcal{T}(n)}(t) $ is the estimate of $L^{\mathcal{T}(n)}$.
\end{thm}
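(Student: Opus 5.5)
The plan is to show that the quantity integrated in \eqref{eq:RK_systemd} equals $L^{\mathcal{T}(k)} w(t) w(t)^\top$ exactly on the interval. Integrating then gives $Z(t)=L^{\mathcal{T}(k)}Y(t)$, because both start at zero at $t^k$. Throughout, $L_m$ is the current auxiliary matrix and $\hat u$ is as in \eqref{control}.

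\emph{Step 1: error dynamics.} Define the error $\tilde x(t)=x(t)-\hat x(t)$. Subtracting \eqref{eq:RK_systema} from \eqref{sys}, the terms $f(x)$ and $u$ cancel exactly. This leaves
\[
\dot{\tilde x}=(L^{\mathcal{T}(k)}-L_m)x-\tau\tilde x .
\]
Write $\Delta:=L^{\mathcal{T}(k)}-L_m$, which is constant on $[t^k,t^{k+1})$. Variation of constants gives
\[
\tilde x(t)=e^{-\tau(t-t^k)}\tilde x(t^k)+\int_{t^k}^t e^{-\tau(t-s)}\Delta x(s)\,ds .
\]
The first term is exactly $\zeta(t)$ from Line 6 of Algorithm~\ref{alg:algorithm1}. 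Since $w(t^k)=0$, the filter \eqref{eq:RK_systemb} gives $w(t)=\int_{t^k}^t e^{-\tau(t-s)}x(s)\,ds$. Hence the second term is $\Delta w(t)$, and
\[
\tilde x(t)-\zeta(t)=\Delta w(t),
\]
using that $\Delta$ commutes with the scalar kernel.

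\emph{Step 2: the identity for $Z$ and $Y$.} Substituting Step 1 into \eqref{eq:RK_systemd} gives
\[
\dot Z=(L_m w+\Delta w)w^\top=L^{\mathcal{T}(k)}ww^\top=L^{\mathcal{T}(k)}\dot Y .
\]
With $Z(t^k)=Y(t^k)=0$, integration yields \eqref{eq:ZY_relation_aux}. The argument is at the level of the continuous-time ODE. I will treat the RK discretization as exact, in line with the statement's idealization and the small step size $h$.

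\emph{Step 3: invertibility and recovery.} By Lemma~\ref{lemma2}, Assumptions~\ref{assump:1} and \ref{assump:2} together with boundedness of $x$ ensure that $w$ is PE on $[t^n,t^{n+1})$. By Definition~\ref{PE}, PE means that $\int_{t^n}^{t}ww^\top\,ds=Y(t)\succ\gamma I$ for some $t<t^{n+1}$; here Assumption~\ref{assump:2} is what guarantees that the required window fits inside the dwell time. At the first such time $t_e$, the matrix $Y(t)$ is positive definite and hence invertible. From \eqref{eq:ZY_relation_aux},
\[
\hat L(t)=Z(t)Y(t)^{-1}=L^{\mathcal{T}(n)} .
\]

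The main obstacle is Step 1. It relies on the exact cancellation of the initial-condition transient by $\zeta$, which requires $w(t^k)=0$ and the same rate $\tau$ in both the error dynamics and the filter. I would check this carefully. A secondary point is that the index range in the statement ($k\ge1$ versus $k=0$) is immaterial, since the argument is identical on every interval.
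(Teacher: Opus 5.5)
Your proposal is correct and takes the paper's approach. The paper shows that $\zeta=\tilde x+(L_m-L^{\mathcal{T}(k)})w$ satisfies $\dot\zeta=-\tau\zeta$ with initial value $\tilde x(t^k)$, so that $L_m w+\tilde x-\zeta=L^{\mathcal{T}(k)}w$, integrates to get $Z=L^{\mathcal{T}(k)}Y$, and then uses Lemma~\ref{lemma2} for persistence of excitation and hence invertibility of $Y$ at $t_e$; your variation-of-constants computation is just the integrated form of that same identity, and your $\zeta(t)=e^{-\tau(t-t^k)}\tilde x(t^k)$ is stated more carefully than the paper's version.
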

\begin{pf}
 During $[t^k, t^{k+1})$,  since $L_m$ and $\tau$ given in \eqref{eq:RK_system} are fixed, and $u(t)$ is known, we can show that
$\big(L_m w(\sigma) + \tilde x(\sigma) - \zeta(\sigma)\big)$ is equal to
$L^{\mathcal{T}(k)} w(\sigma)$ for all $\sigma\in[t^k,t^{k+1})$ when \eqref{sys}
holds with a constant $L^{\mathcal{T}(k)}$. 
We use $\zeta(\sigma)= \tilde x(\sigma)+\tilde L^{\mathcal{T}(k)} w(\sigma)$ with $\tilde L^{\mathcal{T}(k)}=L_m-L^{\mathcal{T}(k)}$. We obtain  $\dot \zeta(\sigma)=-\tau \zeta(\sigma)$ from the dynamics of \eqref{sys} and \eqref{eq:RK_system}. It corresponds to $ \zeta(t)   =e^{-\tau} \tilde x(t)$. 
Therefore
\begin{align*}
Z(t)
&= \int_{t^k}^{t} L^{\mathcal{T}(k)} w(\sigma) w(\sigma)^\top d\sigma
 = L^{\mathcal{T}(k)} \int_{t^k}^{t} w(\sigma) w(\sigma)^\top d\sigma \\
&= L^{\mathcal{T}(k)}\, Y(t),
\end{align*}
which proves \eqref{eq:ZY_relation_aux}. This holds for all $k=0, 1,\dots, \kappa$. From Lemma \ref{lemma2}, we can obtain that $w$ is persistently exciting during $[t^n, t^{n+1})$. Thereby, $Y(t)> \gamma I$ for $t>t_e$. It also means that $Y(t)$ is invertible for $t>t_e$, the estimator \eqref{estimation_A}
recovers the true mode.\end{pf}

\begin{remark}  The set-up of the network \eqref{sys} requires controlling all nodes. This requirement can be relaxed using the framework presented in \citep{wang2025leader}, which selects a small subset of nodes and applies the control input only to these nodes to excite the network. 
\end{remark}

\subsection{Estimation across Multiple Intervals for a Single Mode}
\label{sub:mone mode}
From Theorem~\ref{theo:1},  accurate identification of one mode requires that it possesses at least one interval which has a sufficiently long dwell time to ensure persistent excitation of the filtered state. However, in many OMAS, certain modes may reappear frequently but only for short durations, making such a requirement unrealistic. To overcome this limitation, in this section, we adopt a more flexible condition based on all occurrences of a mode.

  Formally, let $\mathcal {P}_j=\{k : \mathcal{T}(k)=j\}$ denote the index set of all the intervals during which mode $j$ is active. Algorithm~\ref{alg:algorithm2} is proposed to estimate the connectivity matrix $L^j$ for each mode $j$, which combines different intervals for each mode $j$. 
For $k \in \{0,\dots,\kappa\}$, denote $
\mathcal{Y}_k 
:= \bigl\{\, Y(t_\ell)\; \big|\; \ell \in \mathbb{N},\; t^{k} + \ell h < t^{k+1} \,\bigr\}$ and  $\mathcal{Z}_k 
:= \bigl\{\, Z(t_\ell)\; \big|\; \ell \in \mathbb{N},\; t^{k} + \ell h < t^{k+1} \,\bigr\}.$ Denote  $t_{\mathcal{P}_j(i)}=t^{\mathcal{P}_j(i)-1}+\Bigl\lfloor \frac{t^{\mathcal{P}_j(i)}-t^{\mathcal{P}_j(i)-1}}{h} \Bigr\rfloor h$ if $\mathcal{P}_j(i)\ne 0$ and $t_{\mathcal{P}_j(i)}=0$ if $\mathcal{P}_j(i)= 0$, where  $i=1,2,\cdots, |{\mathcal{P}_j}|$. 


\begin{algorithm}
\caption{ Topology Identification Algorithm over Multiple Intervals}\label{alg:algorithm2}
\begin{algorithmic}[1] 
\State \textbf{Input:} $\{\mathcal{Y}_k\}_{k \in \{0,\dots,\kappa\}}$, $\{\mathcal{Z}_k\}_{k \in \{0,\dots,\kappa\}}$ and step size $h$ from Algorithm~\ref{alg:algorithm1}.
\State Initialize $Y_1(t_0)=Y(t_0)$ and  $Z_1(t_0)=Y(t_0)$.

\For{$i = 2,3,\dots,|\mathcal{P}_j|$}
    \For{$\ell = 0,1,2,\dots,$ such that $t_\ell = t^{\mathcal{P}_j(i)} + \ell h < t^{\mathcal{P}_j(i)+1}$}
        \State Update the accumulated matrices:
        \[
        \begin{aligned}
        Y_1(t_\ell) &= Y_1\!\left(t_{\mathcal{P}_j(i-1)} \right) + Y(t_\ell), \\
        Z_1(t_\ell) &= Z_1\!\left(t_{\mathcal{P}_j(i-1)} \right) + Z(t_\ell).
        \end{aligned}
        \]

        \If{$Y_1(t_\ell) \succ \gamma I$,} \State Compute
            $\hat L^{j} = Z_1(t_\ell)\, Y_1(t_\ell)^{-1}.$
            \State \textbf{Terminate the algorithm and return} $\hat L^{j}$.
        \EndIf
    \EndFor
\EndFor
\end{algorithmic}
\end{algorithm}

Instead of relying on the PE condition within a single interval, we aggregate the filtered data from all intervals indexed by $\mathcal {P}_j$. If the cumulative excitation across these intervals is sufficient, then the topology corresponding to mode $j$ can be uniquely recovered.  We define the set of total intervals of mode \( j \) as $\{ [t^{\mathcal{P}_j(i)},\, t^{\mathcal{P}_j(i)+1} ) \}$  for all $i = 1,\cdots, |\mathcal{P}_j|$, that is, the union of all the intervals in which mode \( j \) is active.    The corresponding dwell time is defined as $\sum_{k\in \mathcal{P}_j} \tau^k $. 

\begin{assumption}\label{assump:3}
The dwell time of the total intervals associated with each mode $j$ is sufficiently long to ensure that the aggregated matrix $Y_2:=
 \sum_{i \in \mathcal {P}_j} Y(t_{\mathcal{P}_j(i)})$ satisfies  $Y_2 \succ \gamma I $, where $I$ is the identity matrix with approximate dimensions.
\end{assumption}

\begin{remark}
Assumption~\ref{assump:1} states that there exists a single interval of a mode which can last sufficiently long for topology estimation, similarly to \citep{sun2023identifiability}. 
Assumption~\ref{assump:2} is significantly less restrictive than Assumption~\ref{assump:1}, as it does not require any single interval of a mode to be long, but only that the \emph{sum} of all the intervals in which the mode is active provides adequate excitation.  It is therefore well suited to settings with fast switching or intermittent agent participation.   
\end{remark}

We now present our result on aggregating all occurrences of a mode to estimate its connectivity matrix. 
\begin{thm}\label{theore2}
Assume that the network states $x(t)$ in \eqref{sys} remain bounded. Under  Assumptions \ref{assump:2} and \ref{assump:3}, for the network modeled by the dynamics \eqref{sys} the control input \eqref{control}, implementing  Algorithms \ref{alg:algorithm1} and \ref{alg:algorithm2} guarantees that the estimation errors $\tilde L^j =\hat L^j -L^j$ are zero for each mode $j$.  
\end{thm}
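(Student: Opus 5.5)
The plan is to reduce the multi-interval estimate to the per-interval identity of Theorem~\ref{theo:1} and then use linearity of the aggregation. The key observation is that the first part of Theorem~\ref{theo:1}, $Z(t)=L^{\mathcal{T}(k)}Y(t)$ for all $t\in[t^k,t^{k+1})$, does not use any excitation assumption. It only uses that the topology is constant on the interval and that $\zeta$ is computed as in Algorithm~\ref{alg:algorithm1}, which cancels the transient $\tilde x(t^k)$ after the reset $\hat x(t^k)=w(t^k)=0$, $Y(t^k)=Z(t^k)=0$. Hence, for every interval $k\in\mathcal{P}_j$ and every sampled time $t_\ell$ in it, $Z(t_\ell)=L^jY(t_\ell)$. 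Here we use that all intervals in $\mathcal{P}_j$ share the same mode, so they share the same node set and dimension $N_j$, and $L^{\mathcal{T}(k)}=L^j$. I would state this first as a small claim, noting explicitly that it holds regardless of dwell time.

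Second, unroll the recursion in Algorithm~\ref{alg:algorithm2}. By induction on $i$, at any sampled time $t_\ell$ inside the $i$-th occurrence of mode $j$ we have
\begin{equation*}
Y_1(t_\ell)=\sum_{r=1}^{i-1}Y\bigl(t_{\mathcal{P}_j(r)}\bigr)+Y(t_\ell),\qquad Z_1(t_\ell)=\sum_{r=1}^{i-1}Z\bigl(t_{\mathcal{P}_j(r)}\bigr)+Z(t_\ell),
\end{equation*}
where $t_{\mathcal{P}_j(r)}$ is the last grid point of the $r$-th occurrence. The initialization should be read as $Z_1(t_0)=Z(t_0)$. Applying the claim termwise and factoring out the common matrix gives $Z_1(t_\ell)=L^jY_1(t_\ell)$ for every such $t_\ell$.

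Third, use Assumption~\ref{assump:3} for termination. Each $Y(\cdot)$ is a Gram integral of $w$, so it is positive semidefinite, and $Y(t)$ is nondecreasing in $t$ within an interval. Therefore $Y_1$ is nondecreasing in the Loewner order along the run of the algorithm. At the final sample of the last occurrence, $Y_1$ equals $Y_2\succ\gamma I$. Consequently the test $Y_1(t_\ell)\succ\gamma I$ is met at some finite step, possibly earlier. At that step $Y_1$ is invertible, so
\begin{equation*}
\hat L^j=Z_1Y_1^{-1}=L^jY_1Y_1^{-1}=L^j,
\end{equation*}
and $\tilde L^j=0$. Bounded $x$ guarantees that $w$, $Y$ and $Z$ are well defined and finite. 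Assumption~\ref{assump:2} is not really needed beyond what Assumption~\ref{assump:3} already provides; at most it is used through Lemma~\ref{lemma2} to justify that the excitation accumulates.

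The main obstacle is the first step. The proof of Theorem~\ref{theo:1} is terse, and I would need to verify carefully that $\zeta(\sigma)=\tilde x(\sigma)+(L_m-L^j)w(\sigma)$ satisfies $\dot\zeta=-\tau\zeta$ with $\zeta(t^k)=\tilde x(t^k)$. Using $\dot x=L^jx+\hat u$, $\dot{\hat x}=f+L_mx+u+\tau\tilde x$ and $\dot w=x-\tau w$, I would check this by differentiating $\tilde x+(L_m-L^j)w$ directly. This matters for short intervals, because the identity must hold from the very start of each interval, not only asymptotically. I would also be explicit that exact equality is in continuous time, and treat the Runge--Kutta discretization as exact in the limit $h\to0$, as the paper implicitly does.
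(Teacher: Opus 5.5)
Your proposal is correct and follows essentially the same route as the paper. Both arguments apply the identity $Z(t)=L^jY(t)$ from Theorem~\ref{theo:1} on every interval in $\mathcal{P}_j$, sum by linearity to get $Z_1=L^jY_1$, and invoke Assumption~\ref{assump:3} to invert $Y_1$. Your extra details only tighten points the paper's terse proof glosses over: the identity holds at every sampled step, so early termination of Algorithm~\ref{alg:algorithm2} is harmless; the invertibility actually comes from Assumption~\ref{assump:3}, not Assumption~\ref{assump:2}; and you check explicitly that $\dot\zeta=-\tau\zeta$ with $\zeta(t^k)=\tilde x(t^k)$.
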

\begin{pf}
From Theorem~\ref{theo:1}, for each interval $[t^k, t^{k+1})$ satisfying $\mathcal{T}(k)=j$, we have
$Z(t) = L^{j} Y(t)$.
Summing over all the intervals indexed by $\mathcal{P}_j$, we obtain
\[
\sum_{i \in \mathcal {P}_j} Z(t_{\mathcal{P}_j(i)}) 
= 
L^{j} \sum_{i \in \mathcal {P}_j} Y(t_{\mathcal{P}_j(i)}).
\]
These correspond to $Y_1(t)$ and $Z_1(t)$ in Algorithm \ref{alg:algorithm2}. Assumption~\ref{assump:2} ensures that the aggregated matrix $ Y_2
$
is positive definite, and therefore invertible.   The topology for mode $j$ is thus recovered via
\[
\hat L^{j}
=
\bigg( \sum_{i \in \mathcal {P}_j} Z(t_{\mathcal{P}_j(i)}) \bigg)
\bigg( \sum_{i \in \mathcal {P}_j} Y(t_{\mathcal{P}_j(i)}) \bigg)^{-1},
\]
thereby $\tilde L^{j} = 0^{N_j \times N_j}$.\end{pf}

\begin{remark}
If agent arrivals and departures cease after a finite time, Algorithm~\ref{alg:algorithm2} can be applied directly. In contrast, when structural changes continue indefinitely, the method can be executed online: newly appearing vertex sets are treated as new modes, and their topologies are estimated once sufficient dwell time of the total interval has been accumulated.
\end{remark}

\section{A framework for topology estimation in general OMAS}
\label{sec:clustering}
In this section, we develop a method to determine the mode label for each time segment, which is the basis for identifying switching interaction topologies in OMAS. The key idea is to first provide a projected-based  measure between time segments and then cluster time segments that correspond to the same mode. Afterwards, all data associated with each cluster are aggregated to estimate the topology of that mode.

\subsection{Time sequence clustering}
 We assume the switching instants are observable—either directly from agent arrival/exit events or through change detection mechanisms applied to the data. Our goal is therefore to group time segments that share the same underlying mode.

A straightforward idea is to estimate the connectivity matrix for each time segment and cluster the matrices using the difference among the estimated connectivity matrices, as done in existing statistical-learning approaches for switched system identification \citep{massucci2022statistical}. In particular, \citep{massucci2022statistical} uses an expectation–maximization–like procedure that iterates between clustering and parameter estimation. However, this pipeline perform poorly in OMAS. Each short time segment contains limited information, and thus directly estimating the connectivity matrix $L^{j}$ from each time segment is inaccurate. Even if two time segments share the same matrix $L^{j}$, their limited data can produce substantially different least-squares estimates $\hat L^{j}$, resulting in incorrect mode assignments.

To address this challenge, we provide a projection-based distance to measure the distance among the rough topology estimations for each time segment. Let $I_s = [t^s,t^{s+1})$  and $I_r=[t^r,t^{r+1})$ be two intervals, where $s,r \in \{0, 1, \cdots, \kappa\}$ and $s\ne r$.  Denote $t_s=t^{s}+\Bigl\lfloor \frac{t^{s+1}-t^{s}}{h} \Bigr\rfloor h$ and $t_r=t^{r}+\Bigl\lfloor \frac{t^{r+1}-t^{r}}{h} \Bigr\rfloor h$.   Denote the corresponding terminal matrices from Algroithm~\ref{alg:algorithm1} by
\begin{equation}
    Y_s := Y(t_{s} ), \qquad Z_s := Z(t_{s}),
\end{equation}
\begin{equation}
    Y_r := Y(t_{r}), \qquad Z_r := Z(t_{r}),
\end{equation}
and define the corresponding least-squares estimates and projection matrices as
\begin{equation}
    \hat L^s := Z_s Y_s^\dagger, \qquad
    \hat L^r := Z_r Y_r^\dagger, 
    \end{equation}
\begin{equation}
    P_s := Y_s Y_s^\dagger, \qquad
    P_r := Y_r Y_r^\dagger,
\end{equation}
where $(\cdot)^\dagger$ denotes the Moore--Penrose pseudoinverse, and  $P_s$ and  $P_s$  are the orthogonal projection onto $\operatorname{range}(Y_s)$ and $\operatorname{range}(Y_r)$, correspondingly.

\begin{lemma}
\label{lem:mode_consistency}
Assume that on both intervals $I_s$ and $I_r$ the system is in the same connectivity matrix $L$, i.e.,
\[
 Z_s = L Y_s, \qquad  Z_r = L Y_r.
\]
Then the local least-squares operators satisfy
\begin{equation}
\hat L^s P_r = \hat L^r P_r
\quad\text{and}\quad
\hat L^r P_s = \hat L^s P_s.
    \label{eq:projection_identity_Ax_plus_u}
\end{equation}
\end{lemma}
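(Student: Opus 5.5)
The plan is to reduce both identities to a statement about the two ranges, and then show that the ranges coincide in the setting of the paper.

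\textbf{Reduction.} Substituting $Z_s = LY_s$ into the definition of the local estimate gives
\[
\hat L^s = Z_sY_s^\dagger = LY_sY_s^\dagger = LP_s ,
\]
and likewise $\hat L^r = LP_r$. Since $P_r$ is an orthogonal projector, $P_r^2=P_r$. The first identity in \eqref{eq:projection_identity_Ax_plus_u} therefore reads $LP_sP_r = LP_r$, that is,
\[
L(I-P_s)P_r = 0 .
\]
Symmetrically, the second identity is $L(I-P_r)P_s=0$. Both hold as soon as $\operatorname{ran}(Y_r)\subseteq\operatorname{ran}(Y_s)$ and $\operatorname{ran}(Y_s)\subseteq\operatorname{ran}(Y_r)$. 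It therefore suffices to show $\operatorname{ran}(Y_s)=\operatorname{ran}(Y_r)$. I will aim for the stronger fact that both $Y_s$ and $Y_r$ are nonsingular on $\mathbb{R}^{N}$, with $N$ the common dimension of $L$. Then $P_s=P_r=I$, $\hat L^s=\hat L^r=L$, and both identities are immediate.

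\textbf{Key step: $Y_s\succ0$ for every interval with $t_s>t^s$.} By \eqref{eq:RK_systemc} and $Y(t^s)=0$,
\[
v^\top Y_s v=\int_{t^s}^{t_s}\bigl(v^\top w(\sigma)\bigr)^2\,d\sigma
\]
for any $v\in\mathbb{R}^{N}$. So $Y_s$ is positive semidefinite, and $v\in\ker Y_s$ forces $v^\top w\equiv0$ on $[t^s,t_s]$.

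Under the control \eqref{control}, the proof of Lemma~\ref{lemma2} shows the closed loop is $\dot x = Lx+\hat u$, where $\hat u$ is a finite sum of sinusoids. Hence $x$ and the filter output $w$ from \eqref{eq:RK_systemb} are real-analytic in time, and so is $v^\top w$. By the identity theorem, vanishing on an interval of positive length forces $v^\top w\equiv0$ on the whole time axis of this linear time-invariant evolution.

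This contradicts the persistence of excitation of $w$ given by Lemmas~\ref{lemma1} and~\ref{lemma2}, which rely on the sufficient richness of $\hat u$ of order $N$ and the Hurwitz property of Assumption~\ref{assump:1}. So $v=0$, $Y_s$ is invertible, and $P_s=I$. The same argument gives $P_r=I$.

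\textbf{Main obstacle.} The delicate step is the analytic-continuation argument. It needs the excitation to be genuinely analytic, namely a finite sum of sinusoids, and it needs the PE of Lemma~\ref{lemma2} to be read as a property of the signal generated by the fixed pair $(L,\hat u)$, rather than only of the finite window. I would first verify it via the explicit representation $w(t)=\int e^{-\tau(t-\sigma)}x(\sigma)\,d\sigma$ together with the variation-of-constants formula for $x$, which makes analyticity evident.

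If one wants to avoid this step, the reduction above still proves the identities whenever the ranges coincide. That would be the fallback presentation, stating the range condition as what the shared excitation guarantees. The final bookkeeping is to check that $P_s$ is symmetric, which holds since $Y_s=Y_s^\top$, and that $P_r^2=P_r$.
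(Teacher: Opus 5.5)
\textbf{Verdict: genuine gap.} Your reduction is correct; the step that establishes full rank of $Y_s$ and $Y_r$ does not go through.

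\textbf{What you got right.} Your reduction is the same computation the paper performs. From $\hat L^s=LP_s$ and $\hat L^r=LP_r$, the first identity is exactly $L(I-P_s)P_r=0$ and the second is $L(I-P_r)P_s=0$. The paper closes its argument by asserting ``$P_sP_r=P_r$ on $\operatorname{ran}(Y_r)$''. That is the inclusion $\operatorname{ran}(Y_r)\subseteq\operatorname{ran}(Y_s)$, stated without justification. It does not follow from $Z_s=LY_s$ and $Z_r=LY_r$. Take $N=2$, $L=I$, $Y_s=\mathrm{diag}(1,0)$, $Y_r=\mathrm{diag}(0,1)$: then $\hat L^sP_r=0\neq\mathrm{diag}(0,1)=\hat L^rP_r$. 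So the lemma as stated needs a range hypothesis, and you have located exactly the missing piece.

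\textbf{Where it breaks.} Analytic continuation within one mode is fine. The problem is the contradiction step. It requires the \emph{continued} trajectory to have linearly independent components, i.e.\ the trajectory started from $(x(t^s),0)$ and run forever under $(L,\hat u)$. You take this independence from PE, but nothing available gives it.
\begin{itemize}
\item Lemma~\ref{lemma2} yields PE only under Assumption~\ref{assump:2} on the actual interval. That is precisely the long-dwell-time condition that short segments violate. Applying it to a fictitious infinite-horizon trajectory is an extra assumption. Lemma~\ref{lemma1} only transfers PE from $x$ to $w$, so it adds nothing.
\item It does not follow from Hurwitzness plus Definition~\ref{d1e} either. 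Take $L=-I_2$, $\hat u=\sin(t)e_1$ (one frequency, hence ``sufficiently rich of order $2$''), and $x_2(t^s)=0$. Then $x_2\equiv w_2\equiv 0$, so $Y_s$ is singular on every window.
\item Hurwitzness and the sinusoidal form of $\hat u$ are not hypotheses of the lemma in the first place.
\item If your key step held, every segment would give $\hat L^s=L$ exactly, and the lemma would carry no information. It is meant for rank-deficient $Y_s,Y_r$, where the range condition is a real restriction.
\end{itemize}

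\textbf{How to repair it.} Make your fallback explicit. Assume $\operatorname{ran}(Y_r)\subseteq\operatorname{ran}(Y_s)$ for the first identity and the reverse inclusion for the second. Alternatively, assume directly $L(I-P_s)P_r=0$ and $L(I-P_r)P_s=0$. With that hypothesis your reduction is a complete proof, and a more careful one than the paper's.
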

\begin{pf}
From $ Z_s = L Y_s$ we obtain
\[
\hat L^s P_r
=  Z_s Y_s^\dagger P_r
= L Y_s Y_s^\dagger P_r
= L P_s P_r.
\]
Similarly, from $\tilde Z_r = L Y_r$ we have
\[
\hat L^r P_r
=Z_r Y_r^\dagger P_r
= L Y_r Y_r^\dagger
= L P_r.
\]
Since $P_s P_r = P_r$ on $\operatorname{ran}(Y_r)$, it follows that
$\hat L^s P_r = \hat L^r P_r$. The second identity
$\hat L^r P_s = \hat L^s P_s$ is obtained by symmetry, interchanging $s$ and $r$.
\end{pf}
\begin{remark}
Lemma \ref{lem:mode_consistency} implies that $\hat L^s$ and $\hat L^r$ induce the same linear map on each other's excitation subspaces.
\end{remark}

Using Lemma~\ref{lem:mode_consistency}, we define the following projection-based dissimilarity measure among two time-segments:
\begin{equation}\label{d_sr}
    d(s,r)
    :=
    \big\|\hat L^s P_r - \hat L^r P_r\big\|_F
    +
    \big\|\hat L^r P_s - \hat L^s P_s\big\|_F.
\end{equation}
If two intervals share the same mode and are sufficiently excited,
then $d(s,r)$ is small; otherwise, the projection identities fail and $d(s,r)$
is large. Clustering time segments based on $d(s,r)$ therefore yields groups
corresponding approximately to distinct modes.  We only compare the distances among time segments when the vertex set is the same.
Denote by $p \in \mathbb{N}^+$ the number of the vertex sets among all time segments. Hence we propose Algorithm \ref{alg:mode_clustering} for time segment clustering. 

\begin{algorithm}[t]
\caption{Mode Clustering via Projection-Based Dissimilarity}
\label{alg:mode_clustering}
\begin{algorithmic}[1]
\Require Time partition $\{[t^s,t^{s+1})\}_{s=0}^{\kappa}$, terminal matrices $\{Y_s,Z_s\}_{s=0}^{\kappa}$ obtained from Algorithm \ref{alg:algorithm1}, number of modes $M^i$ for each vertex set $\mathcal{V}_i$, $i=1,\dots, p$.  
\Ensure Mode index $\ell_s \in \{1,\dots, M\}$ for each time segment $s=0,\dots,\kappa$.
\vspace{0.5ex}
\Statex \textbf{Step 1: Local operators and projections}
\For{$s = 0,\dots,\kappa$}
    \State Compute local LS estimate
    \[
        \hat L^s := Z_s Y_s^\dagger
    \]
    where $Y_s^\dagger$ is the Moore--Penrose pseudoinverse.
    \State Compute excitation projector
    \[
        P_s := Y_s Y_s^\dagger.
    \]
\EndFor
\vspace{0.5ex}
\Statex \textbf{Step 2: Pairwise dissimilarity matrix}
\State Classify time segments according to the size of vertex set as $\mathcal{V}_1, \dots, \mathcal{V}_p$. 
\State Initialize $D^i \in \mathbb{R}^{|\mathcal{V}_i|\times |\mathcal{V}_i|}$ for each $\mathcal{V}_i$  with $D^i_{s,s} = 0$. 
\For{$i =1,\dots, p $}
\For{$s = 0,\dots,|\mathcal{V}_i| $}
    \For{$r = s+1,\dots,|\mathcal{V}_i|$}
        \State Compute projection-based dissimilarity measure given in \eqref{d_sr}.
        \State Set $D^i_{s,r}=D^i_{r,s} = d(s,r)$.
    \EndFor
\EndFor
\EndFor
\vspace{0.5ex}
\Statex \textbf{Step 3: Mode clustering}
\State Apply agglomerative clustering with $M^i$ clusters to the distance matrix $D^i$.
\State Obtain mode labels $\ell_s \in \{1,\dots,M\}$ for each time segment $s = 0,\dots,\kappa $ from the chosen clustering method.
\vspace{0.5ex}
\State \Return $\{\ell_s\}_{s=0}^{\kappa}$.
\end{algorithmic}
\end{algorithm}

\begin{remark}
Here we choose the agglomerative clustering algorithm \citep{kaufman2009finding}  for time segment clustering, which requires the mode number in advance. We note that more sophisticated clustering  algorithms could be implemented without knowing the exact number of modes, which is a part of our future work.     
\end{remark}

\subsection{Clustering of Time segments into Modes}
After computing the pairwise dissimilarity $d(s,r)$ for all the time segments, we cluster the time segments into groups, where each group corresponds to a single mode. Intuitively, time segments that share the same underlying topology produce similar projection identities (Lemma~\ref{lem:mode_consistency}), resulting in small dissimilarities; time segments arising from different modes violate these identities, yielding large dissimilarities. Applying agglomerative clustering to the dissimilarity matrix therefore partitions the entire switching sequence into mode-consistent clusters.

Once clustering is complete, all the time segments assigned to the same mode are merged into a unified dataset. This aggregation compensates for the limited excitation available within each short time segment and provides a sufficiently rich dataset for reliable topology identification. Given the merged dataset for a mode, we then apply Algorithm~\ref{alg:algorithm2} to estimate the corresponding connection matrix (i.e., the graph associated with that mode). In this way, each cluster yields one reconstructed interaction topology.

\section{Simulation Results}
\label{Sec:simu}

To illustrate our theoretical findings, we present a numerical example of an OMAS modeled by \eqref{sys}, initially represented by a directed graph of eight agents , to which two additional agents join over time, as depicted in Figure \ref{fig:reference}. More precisely, we consider two modes for the eight-agent network, corresponding to modes 1-2 in Figure \ref{fig:directedweight}, and three modes for ten-agent network, corresponding to modes 3-5 in Figure \ref{fig:directedweight}. The switching sequence among these modes, during which agents may join or leave the system, is shown in Figure \ref{fig:switching}. The initial conditions of the agents are set randomly, and whenever new agents join, their initial states are also assigned randomly.  

\begin{figure}
    \centering
    \includegraphics[width=\linewidth]{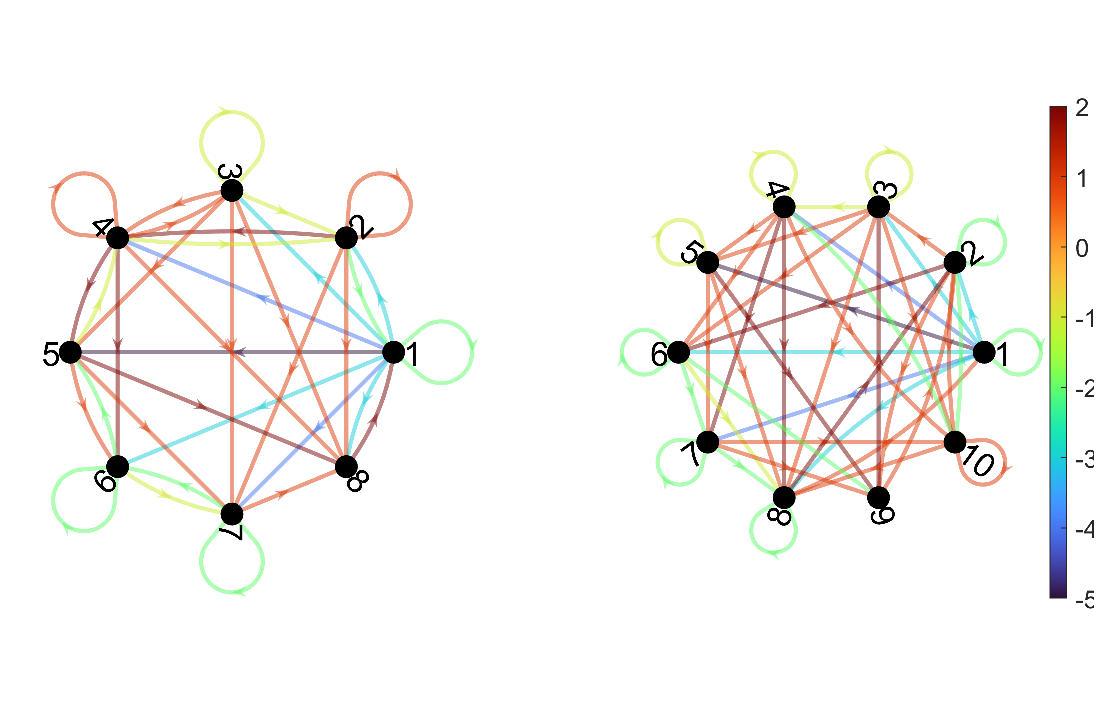}
   \vspace{-40pt} \caption{The reference graphs for $8$ agents and $10$ agents represented from $L_m$.}
    \label{fig:reference}
\end{figure}

\begin{figure}
    \centering
    \includegraphics[width=0.7\linewidth]{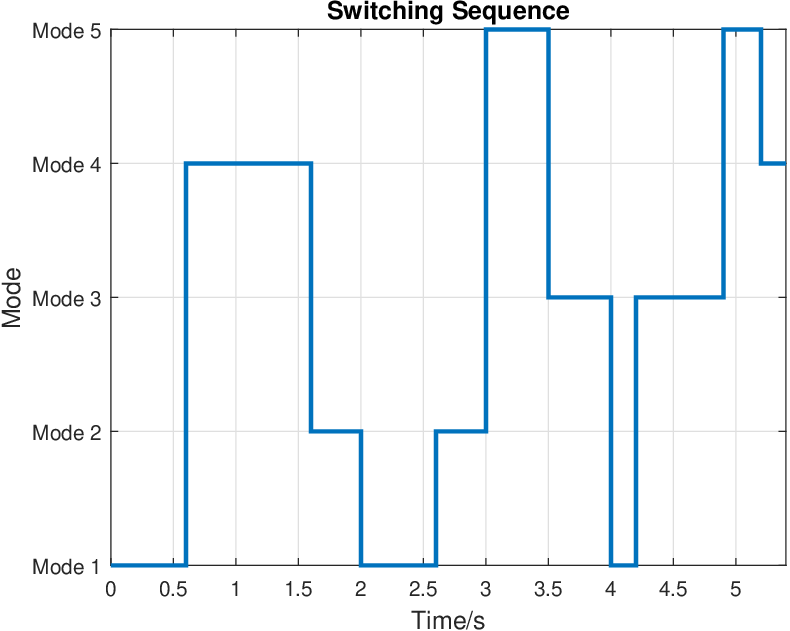}
    \caption{The evolution of the switching sequence  with time.}
    \label{fig:switching}
\end{figure}
The control input $u(t)$ is chosen as a combination of sinusoid signals with different frequencies. We choose different $L_m$ in Algorithm \ref{alg:algorithm1} for $8$-agent and $10$-agent networks and their reference graphs are displayed in Figure \ref{fig:reference}. The parameter $\tau$ in \eqref{eq:RK_system} is set to $\tau =0.05$. 

\begin{figure}
    \centering
    \includegraphics[width=\linewidth]{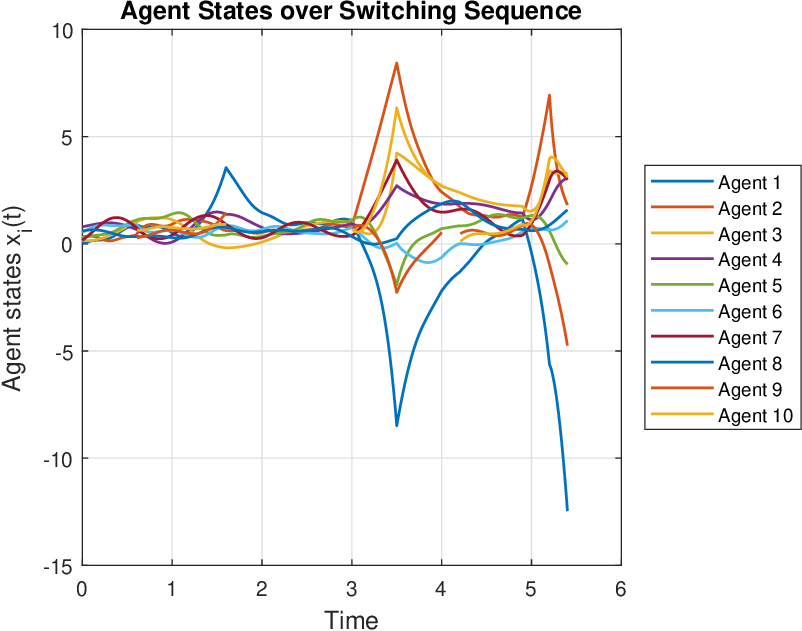}
    \caption{The evolution of the agent states  with time.}
    \label{fig:agent}
\end{figure}
\begin{figure}
    \centering
    \includegraphics[width=0.7\linewidth]{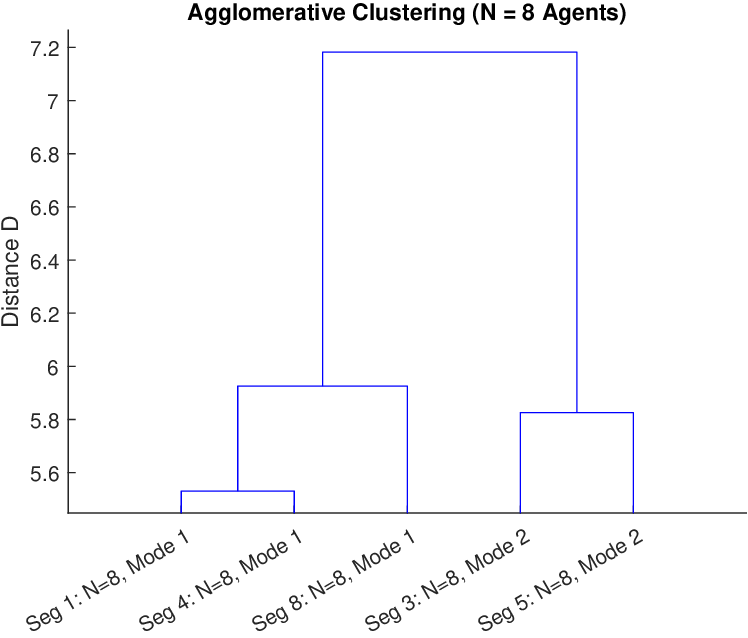}
    \caption{Dendrogram for 8 agents showing the projection-based dissimilarity measure among time segments. The number and mode number of time segments are the same with Fig.\ref{fig:switching}. There are $2$ modes and the clustering among time segments shows that the clustering of time segments are correct.} 
    \label{fig:distance_8}
\end{figure}
\begin{figure}
    \centering
    \includegraphics[width=0.7\linewidth]{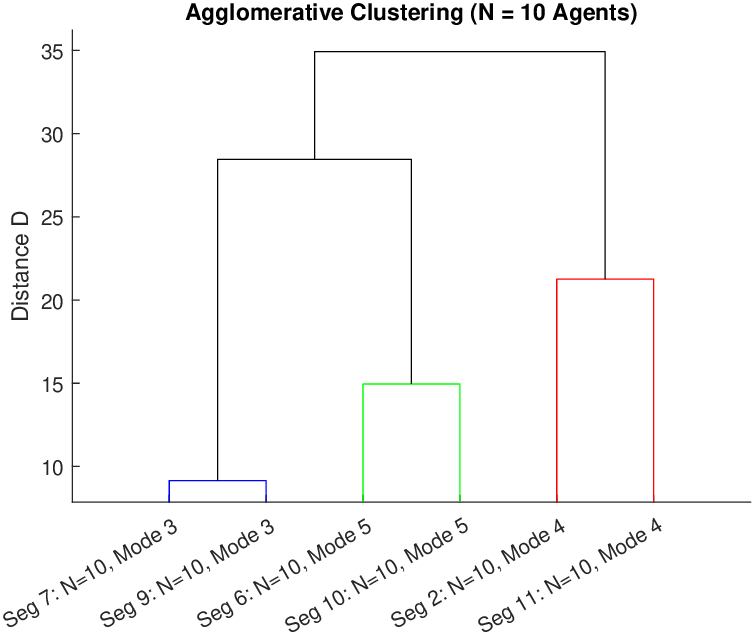}
    \caption{Dendrogram for 10 agents showing the projection-based dissimilarity measure time-segments. The number and mode number of  time segments are the same with Fig.\ref{fig:switching}. There are $3$ modes and the clustering among  time segments shows that the clustering of time segments are correct.}
    \label{fig:distance_10}
\end{figure}

\begin{table}[h] 
\centering
\caption{Estimation Errors Across Different Modes}
\label{tab:mode_errors} 
\begin{tabular}{lccccc}
\toprule
Modes number & 1 & 2 & 3 & 4 & 5 \\
 $\|\tilde L\|_F$ & 1.8e-11 & 2.1e-8 & 1.6e-8 & 1.0e-8 & 3.1e-7 \\
\bottomrule
\end{tabular}
\end{table}

The trajectories of states $x$ are displayed in Figure \ref{fig:agent}, where one can witness that $x$ are bounded. Firstly, we cluster time segments according to the size of the network using Algorithm \ref{alg:mode_clustering}. The dendrograms among  time segments for $8$ and $10$ agents are shown in Figure \ref{fig:distance_8} and \ref{fig:distance_10} and using agglomerative clustering, the clustering results for each time segment for modes are correct.
Then using  Algorithm \ref{alg:algorithm2}, we obtained the topology estimation errors for $5$ modes in Table \ref{tab:mode_errors},  which verifies Theorem~\ref{theore2}.

\section{Conclusion}
\label{Sec:con}
This paper presented a topology identification framework for OMAS characterized by dynamic node sets and fast switching interactions. By introducing a projection-based dissimilarity measure, we enabled reliable clustering of time segments and accurate mode-wise topology estimation without requiring long dwell time. 
Future research will focus on developing clustering algorithms that operate without prior knowledge of the number of modes, allowing data-driven mode discovery. Another important direction is to further relax the dwell time requirements, enabling identification under even more rapid and irregular switching conditions. A relevant numerical example shows the efficiency of our algorithms.

\bibliography{ref,references_pelin}

\end{document}